\documentclass{article}

\usepackage{tikz}
\usetikzlibrary{arrows,decorations.pathmorphing,backgrounds,positioning,fit,petri}

\tikzstyle{every node}=[circle, draw, fill=black,
inner sep=0pt, minimum width=3pt]

\usepackage{amsthm}

\usepackage{amsfonts}

\usepackage{amsmath}

\usepackage{algorithm}
\usepackage{algpseudocode}

\usepackage{hyperref}

\newtheorem{define}{Definition}[section]
\newtheorem{lemma}[define]{Lemma}
\newtheorem{corr}[define]{Corollary}

\hypersetup{
  linkcolor=green,        
  citecolor=green,        
  colorlinks=false,       
}

\begin{document}

\title{Recognition of unipolar and generalised split graphs}

\author{
  Colin McDiarmid\\
  \texttt{cmcd@stats.ox.ac.uk}
  \and
  Nikola Yolov\\
  \texttt{niklov@cs.ox.ac.uk}
}
\maketitle

\begin{abstract}
  A graph is unipolar if it can be partitioned into a clique and a disjoint union of cliques,
  and a graph is a generalised split graph if it or its complement is unipolar.
  A unipolar partition of a graph can be
  used to find efficiently the clique number, the stability number, the chromatic number,
  and to solve other problems that are hard for general graphs.
  We present the first $O(n^2)$ time algorithm
  for recognition of $n$-vertex unipolar and generalised split graphs,
  improving on previous $O(n^3)$ time algorithms.
\end{abstract}

\section{Introduction}
\subsection{Definition and motivation}

A graph is \emph{unipolar} if for some $k \ge 0$
its vertices admit a partition into $k+1$ cliques
$\{C_i\}_{i=0}^k$ so that there are no edges between $C_i$ and $C_j$ for $1 \le i < j \le k$.
A graph $G$ is a \emph{generalised split graph} if either $G$
or its complement $\overline{G}$ is unipolar.
All generalised split graphs are perfect;
and Pr{\"o}mel and Steger \cite{promelsteger}
show that almost all perfect graphs are generalised split graphs.
Perfect graphs can be recognised in polynomial time
\cite{perfect_recognition_1} and \cite{perfect_recognition_2},
and there are many NP-hard problems which are solvable in polynomial time
for perfect graphs,
including
the stable set problem,
the clique problem,
the colouring problem,
the clique covering problem
and their weighted versions \cite{algorithms_perfect}.
If the input graph is restricted to be a generalised split graph,
then there are much more efficient algorithms for the problems above \cite{history3}.
In this paper we address the problem of efficiently recognising generalised split graphs,
and finding a witnessing partition.

Previous recognition algorithms for unipolar graphs include
\cite{history1} which achieves $O(n^3)$ running time,
\cite{history2} with $O(n^2m)$ time
and \cite{history3} with $O(nm + nm^\prime)$ time,
where $n$ and $m$ are respectively the number of vertices and edges of the input graph,
and $m^\prime$ is the number of edges added after a triangulation of the input graph.
Note that almost all unipolar graphs and almost all generalised split graphs
have $(1 + o(1))n^2/4$ edges \cite{us}.
Further, by testing whether $G$ or $\overline{G}$ is unipolar,
each of the mentioned algorithms above recognises generalised split graphs 
in $O(n^3)$ time.
The algorithm in this paper has running time $O(n^2)$.

This leads to polynomial-time algorithms for the problems mentioned above
(stable set, clique, colouring and so on)
which have $O(n^{2.5})$ expected running time for a random perfect graph $R_n$
and an exponentially small probability of exceeding this time bound.
Here we assume that $R_n$ is sampled uniformly from the perfect graphs on vertex set
$[n] = \{1, 2, \ldots n\}$.

\subsection{Notation}

We use $V(G)$, $E(G)$, $v(G)$ and $e(G)$ to denote
$V$, $E$, $|V|$ and $|E|$ for a graph $G = (V, E)$.
We let $N(v)$ denote the neighbourhood of a vertex $v$, and
let $N^+(v)$ denote $N(v) \cup \{v\}$,
also called the closed neighbourhood of $v$.
If $G = (V, E)$ and $S \subseteq V$,
then $G[S]$ denotes the subgraph induced by $S$.
Let $\mathcal{GS}^+$ be the set of all unipolar graphs
and let $\mathcal{GS}$ be the set of all generalised split graphs.
Fix $G = (V, E) \in \mathcal{GS}^+$.
If $V_0, V_1 \subseteq V$ with $V_0 \cap V_1 = \emptyset$ and $V_0 \cup V_1 = V$
are such that $V_0$ is a clique and $V_1$ is a disjoint union of cliques,
then the ordered pair $(V_0, V_1)$ will be called a \emph{unipolar representation} of $G$
or just a \emph{representation} of $G$.
For each unipolar representation $R = (V_0, V_1)$
we call $V_0$ the \emph{central clique} of $R$,
and we call the maximal cliques of $V_1$ the \emph{side cliques} of $R$.
A graph is unipolar iff it has a unipolar representation.
\begin{define}
  Let $R = (V_0, V_1)$ be a unipolar representation of a graph $G$.
  A partition $\mathcal{B}$ of $V(G)$ is a block decomposition of $G$ with respect to $R$
  if the intersection of each part of $\mathcal{B}$ with $V_1$ is either
  a side clique or $\emptyset$.
\end{define}

\subsection{Plan of the paper}

Assume that $G$ is an input graph throughout.
The algorithm for recognising unipolar graphs has three stages.
In the first stage a sufficiently large maximal independent set is found.
The second stage constructs a partition $\mathcal{B}$ of $V(G)$,
such that $\mathcal{B}$ is a block decomposition
for some unipolar representation if $G \in \mathcal{GS}^+$.
The third stage generates a 2-CNF formula
which is satisfiable iff $\mathcal{B}$ is a block decomposition
for some unipolar representation.
The formula is constructed in such a way that a
satisfying assignment of the variables
corresponds to a representation of $G$,
and the algorithm returns either a representation of $G$,
or reports $G \notin \mathcal{GS}^+$.

We describe the third stage first (in \S \ref{sec:verify}) as it is
short and includes a natural transformation to 2-SAT.
In \S \ref{sec:indep} we discuss the first stage of finding large independent sets.
In \S \ref{sec:blocks}
we present the second stage when we seek to build a block decomposition.
Finally, in \S \ref{sec:random_perfect}, we briefly discuss random perfect graphs
and algorithms for them using the algorithm described above.

\subsection{Data Structures}

The most commonly used data type for this algorithm is the set.
We assume that the operation
$A \cap B$ takes $O(\min(|A|, |B|))$ time,
$A \cup B$ takes $O(|A| + |B|)$ time,
$A \setminus B$ takes $O(|A|)$ time
and $a \in A$ takes $O(1)$ time.
These properties can be achieved by using hashtables to implement sets.

Functions will always be of the form $f : [m] \rightarrow A$ for some $m$,
where $[m] = \{1, 2, \ldots m\}$.
Therefore functions can be implemented with simple arrays,
hence the lookup and assignment operations are assumed to require $O(1)$ time.

\section{Verification of Block Decomposition}
\label{sec:verify}
\subsection{2-SAT}

Let $x_1, \ldots, x_n$ be $n$ boolean variables.
A 2-\emph{clause} is an expression of the form $y_1 \lor y_2$,
where each $y_j$ is a variable,
$x_i$, or the negation of a variable, $\neg x_i$.
There are $4n^2$ possible 2-clauses.
The problem of deciding whether or not a formula of the form
$\psi = \exists x_1 \exists x_2 \ldots \exists x_n (c_1 \land c_2 \land \ldots \land c_m)$,
where each $c_j$ is a 2-clause, is satisfiable is called 2-SAT.
The problem 2-SAT is solvable in $O(n + m)$ time
-- \cite{even} and \cite{tarjan},
where $n$ is the number of variables and $m$ is the number of clauses
in the input formula.

\subsection{Transformation to 2-SAT}

Let $G = (V, E)$ be a graph with vertex set $V = [n]$.
In this subsection we show how to test if
a partition of $V$ is a block decomposition for some unipolar representation,
in which case we must have $G \in \mathcal{GS}^+$.
Let $\mathcal{B}$ be the partition of $V$ we want to test.
From each block of $\mathcal{B}$,
we seek to pick out some vertices to form the central clique $V_0$ of a representation,
with the remaining vertices in the blocks forming the side cliques.
Suppose that $|\mathcal{B}| = m$,
and $\mathcal{B}$ is represented by a surjective function $f : V \rightarrow [m]$,
so that $\mathcal{B} = \{f^{-1}[i] : i \in [m]\}$.
Let $\{x_v : v \in V\}$ be Boolean variables.
We use the procedure \texttt{verify} to construct a formula $\psi(x_1, \ldots x_n)$,
so that each satisfying assignment of $\{x_v\}$ corresponds to a representation of $G$.

\input{verify.alg}
There is an exception: the first time a clause is added to $\psi$
it should be added without the preceding sign for conjunction.
The following lemma is easy to check.
\begin{lemma}
  The formula $\psi$ is satisfiable
  iff $\mathcal{B}$ is a block decomposition for some representation.
  Indeed an assignment $\Phi: \{x_v : v \in V\} \rightarrow \{0, 1\}$ satisfies $\psi$
  if and only if $R = (V_0, V_1)$ is a representation of $G$
  and $\mathcal{B}$ is a block decomposition of $G$ with respect to $R$,
  where $V_i = \{v \in V: \Phi(v) = 1 - i\}$.
\end{lemma}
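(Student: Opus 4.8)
The plan is to prove the stronger, per-assignment statement in the second sentence of the lemma; the satisfiability claim in the first sentence then follows immediately by quantifying over assignments. Throughout I read the variables through the stated dictionary $V_i = \{v : \Phi(v) = 1-i\}$, so that $x_v$ true means $v \in V_0$ (the prospective central clique) and $x_v$ false means $v \in V_1$ (the prospective union of side cliques). Write $B_i = f^{-1}[i]$ for the blocks and $S_i = B_i \cap V_1$ for the part of block $i$ landing in $V_1$. The whole argument is a case analysis over the four branches of \texttt{verify}, matching the semantics of each emitted clause to one structural requirement.

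First I would fix an assignment $\Phi$ satisfying $\psi$ and verify that $(V_0, V_1)$ is a representation with $\mathcal{B}$ a block decomposition. That $V_0$ is a clique uses exactly the clauses $(\neg x_u \lor \neg x_v)$, which \texttt{verify} emits for every non-edge $uv$ (both the same-block and different-block branches): if two members of $V_0$ were non-adjacent, the corresponding clause would be falsified. To control $V_1$ I show each $S_i$ is a clique -- a same-block non-edge would falsify the clause $(x_u \lor x_v)$ of the third branch -- and that there are no edges between $S_i$ and $S_j$ for $i \ne j$ -- an edge across blocks with both endpoints in $V_1$ would falsify the clause $(x_u \lor x_v)$ of the second branch. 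Hence $V_1 = \bigsqcup_i S_i$ is a disjoint union of cliques, so $(V_0, V_1)$ is a representation.

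The step I expect to need the most care is the identification of the side cliques with the nonempty $S_i$. Since the $S_i$ are pairwise non-adjacent cliques, each nonempty $S_i$ is a connected component of $G[V_1]$ and therefore a maximal clique, and conversely every maximal clique of $V_1$ lies in one component and so equals some $S_i$; thus the side cliques are exactly the nonempty $B_i \cap V_1$. This is precisely the condition that each block meets $V_1$ in a side clique or in $\emptyset$, i.e.\ that $\mathcal{B}$ is a block decomposition with respect to $R$. For the converse I start from a representation $R = (V_0, V_1)$ for which $\mathcal{B}$ is a block decomposition, define $\Phi$ by the same dictionary, and check that none of the four clause types can be falsified: a falsified $(\neg x_u \lor \neg x_v)$ would put a non-adjacent pair inside the clique $V_0$; a falsified same-block $(x_u \lor x_v)$ would put a non-adjacent pair inside one side clique; and a falsified different-block $(x_u \lor x_v)$ would place an edge between two distinct side cliques, each contradiction being immediate from the definitions. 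This exhausts the cases and completes both directions.
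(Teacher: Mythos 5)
Your proposal is correct and follows essentially the same route as the paper's own proof: the forward direction reads off that $V_0$ is a clique from the $(\neg x_u \lor \neg x_v)$ clauses and that $V_1$ splits into pairwise non-adjacent cliques indexed by the blocks from the two kinds of $(x_u \lor x_v)$ clauses, and the converse is the same routine clause-by-clause check. Your explicit verification that the nonempty sets $B_i \cap V_1$ are exactly the maximal cliques (connected components) of $G[V_1]$ spells out a step the paper compresses into ``the vertices in $V_1$ are grouped into cliques by their value of $f$,'' but it is the same argument.
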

\begin{proof}
  Suppose $\Phi$ is a satisfying assignment and let $V_0$, $V_1$ be as above.
  If $u$ and $v$ are both in $V_0$, then $uv \in E$,
  since otherwise $\Phi$ contains a clause $\neg x_u \lor \neg x_v$.
  If $u$ and $v$ are in $V_1$,
  then either $uv \in E$ and $f(u) = f(v)$ or $uv \not\in E$ and $f(u) \neq f(v)$,
  because in the other two cases $\phi$ contains the clause $x_u \lor x_v$.
  This means that the vertices in $V_1$ are grouped into cliques by their value
  of $f$.
  For the other direction, it is sufficient to verify that each generated clause
  is satisfied, which is a routine check.
\end{proof}

At most a constant number of operations are performed per pair $\{u, v\}$,
so $O(n^2)$ time is spent preparing $\psi$.
The formula $\psi$ can have at most $2$ clauses per pair $\{u, v\}$,
so the length of $\psi$ is also $O(n^2)$, and since 2-SAT can be solved in linear time,
the total time for this step is $O(n^2)$.

\section{Independent Set}
\label{sec:indep}
\subsection{Maximum Independent Set of a Unipolar Graph}

Let $\alpha(G)$ be the maximum size of an independent set in a graph $G$.
Let $G \in \mathcal{GS}^+$ and let $R$ be a unipolar representation of $G$.
Observe that for any representation $R$ of $G$,
the number $s(R, G)$ of side cliques satisfies $s(R, G) \le \alpha(G) \le s(R, G) + 1$.
We deduce that for every two representations $R_1$ and $R_2$ of $G$ we have
$|s(R_1, G) - s(R_2, G)| \le 1$.

If for example $G$ is $K_n$ or its complement,
then the number $s(R, G)$ depends on $R$.
However, this is not necessarily the case for all graphs, see Figure 1.
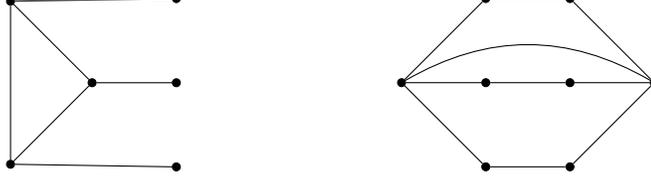
\begin{figure}[h]
  \caption{
    For the graph $G_1$ on the left, $s(R, G_1) = \alpha(G_1)$ for all representations $R$,
    and the graph $G_2$ on the right, $s(R, G_2) = \alpha(G_1) + 1$
    for all representations $R$.
  }
  \centering
  \hfill\hfill\hfill\hfill
  \begin{minipage}[b]{0.40\linewidth}
    \begin{tikzpicture}
  \label{fig:big_stable_set}
  \node (a1)                      {};
  \node (a2)   [below right=of a1]{}
  edge                  (a1);
  \node (a3)   [below left =of a2]{}
  edge                  (a1)
  edge                  (a2);
  \node (b2)   [      right=of a2]{}
  edge                  (a2);
  \node (b1)   [above      =of b2]{}
  edge                  (a1);
  \node (b3)   [below      =of b2]{}
  edge                  (a3);
\end{tikzpicture}
  \end{minipage}
  ~
  \begin{minipage}[b]{0.40\linewidth}
    \begin{tikzpicture}
  \label{fig:big_stable_set}
  \node (a2)                      {};
  \node (b2)   [      right=of a2]{}
  edge                  (a2);
  \node (b1)   [above      =of b2]{}
  edge                  (a2);
  \node (b3)   [below      =of b2]{}
  edge                  (a2);
  \node (c1)   [      right=of b1]{}
  edge                  (b1);
  \node (c2)   [      right=of b2]{}
  edge                  (b2);
  \node (c3)   [      right=of b3]{}
  edge                  (b3);
  \node (d2)   [      right=of c2]{}
  edge                  (c1)
  edge                  (c2)
  edge                  (c3)
  edge     [bend right] (a2);
\end{tikzpicture}
  \end{minipage}
\end{figure}
It can be shown that the number of $n$-vertex unipolar graphs with a unique representation
is $(1 - e^{-\Theta(n)}) | \mathcal{GS}^+_n|$,
and that the number of $n$-vertex unipolar graphs $G$ with a unique representation $R$
and such that $s(G, R) = \alpha(G)$
is $(1 - O( e^{-n^\delta } ) ) | \mathcal{GS}^+_n|$ for a constant $\delta > 0$
\cite{us}.

\subsection{Independent Set Algorithm}

It is well known that calculating $\alpha(G)$ for a general graph is NP-hard.
For $G \in \mathcal{GS}^+$ let $s(G) = \max_R s(R, G)$,
where the maximum is over all representations $R$ of $G$.
For $G \notin \mathcal{GS}^+$ set $s(G) = 0$.
In this section we see how to find a maximal independent set $I$,
such that if $G \in \mathcal{GS}^+$,
then $|I| \ge s(G)$ $(\ge \alpha(G) - 1)$.

The idea is to start with $G$ and with $I = \emptyset$;
and as long as the remaining graph  has two non-adjacent vertices, say $v_1$ and $v_2$,
pick $r = 1$ or $2$ of these vertices to add to $I$,
and delete from $G$ the closed neighbourhood of the added vertices.
We do this in such a way that a given representation $R$ of $G$ yields
a representation with $r$ less side cliques,
or (only when $r=2$) with one less side clique and the central clique removed.

\input{indep.alg}
Observe that the main body of the $\texttt{indep}(G)$ procedure is a \texttt{while} loop.
An alternative way of seeing the algorithm is that instead of the loop
there is a recursive call to $\texttt{indep}(G[U])$ at the end of the iteration
and the procedure returns the union of the
vertices found during this iteration and the recursively retrieved set.
A recursive interpretation is clearer to work with for inductive proofs.

\subsection{Correctness}

\begin{lemma}
  Procedure \label{corr_indep_maximal}
  $\textnormal{\texttt{indep}}(G)$ always returns a maximal independent set $I$.
\end{lemma}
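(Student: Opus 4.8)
The plan is to argue by induction on $v(G)$, using the recursive interpretation of $\texttt{indep}$ described just above the lemma. The base cases are when $G$ is empty (the \texttt{while} loop never runs and $I = \emptyset$ is the unique, hence maximal, independent set) and when $G$ is a nonempty complete graph (the procedure returns a single vertex $\{u\}$, which is independent and dominates all of $V(G)$ since $G$ is a clique, so it is maximal). For the inductive step I would isolate the set $A \subseteq \{u_1, u_2\}$ of vertices added in the first iteration together with the reduced vertex set $U' = V(G) \setminus \bigcup_{a \in A} N^+(a)$, on which the procedure recurses. By the induction hypothesis $\texttt{indep}(G[U'])$ returns a maximal independent set $J$ of $G[U']$, and the output of $\texttt{indep}(G)$ is $A \cup J$.

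First I would check independence. Whichever of the four branches is taken, $A$ is independent in $G$: if $A = \{u_i\}$ this is trivial, and if $A = \{u_1, u_2\}$ it holds because $u_1$ and $u_2$ were chosen with $u_1 u_2 \notin E(G)$. The set $J$ lies inside $U'$ and is independent in $G[U']$, hence in $G$; moreover no vertex of $U'$ equals, or is adjacent to, any vertex of $A$, by the very definition $U' = V(G)\setminus \bigcup_{a\in A} N^+(a)$. Therefore $A \cup J$ is independent in $G$.

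Next I would check maximality, that is, that every $w \in V(G) \setminus (A \cup J)$ has a neighbour in $A \cup J$. If $w \notin U'$ then $w \in N^+(a)$ for some $a \in A$, and since $w \notin A$ this forces $w$ to be adjacent to $a$. If instead $w \in U'$, then $w \in V(G[U']) \setminus J$ (using $A \cap U' = \emptyset$), and the maximality of $J$ in $G[U']$ supplies a neighbour of $w$ inside $J$, which is also a neighbour in $G$. This closes the induction.

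I do not expect a genuine obstacle here. The completeness tests on $G[U_1]$ and $G[U_2]$ only decide how many vertices are added, so they are irrelevant to independence and maximality; they will matter only for the later size guarantee $|I| \ge s(G)$. The one point needing a little care is that the early-return ``complete'' branch must be absorbed into the base case rather than the recursion, together with the routine observation that every $N^+$ is taken in $G$ throughout, so that the implication ``$w \notin U' \Rightarrow w$ adjacent to some $a \in A$'' remains valid even when the closed neighbourhoods of $u_1$ and $u_2$ overlap.
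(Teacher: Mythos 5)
Your proof is correct and rests on the same key observation as the paper's (one-sentence) proof: every vertex removed from $U$ lies in the closed neighbourhood of a vertex placed in $I$, which gives domination, while independence follows from the fact that the recursion only ever continues on vertices outside those closed neighbourhoods. The paper states this directly; you have merely packaged it as an induction over the recursive interpretation (which the paper itself recommends for inductive arguments), so the two proofs are essentially the same.
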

\begin{proof}
  This is easy to see, since each vertex deleted from $U$ is adjacent to a vertex
  put in $I$.
\end{proof}

\begin{lemma}
  \label{corr_indep}
  If $\textnormal{\texttt{indep}}(G)$ returns $I$, then $|I| \ge s(G)$.
\end{lemma}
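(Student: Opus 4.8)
The plan is to prove the statement by strong induction on $|V(G)|$, using the recursive interpretation of $\texttt{indep}$ noted after the pseudocode. If $G \notin \mathcal{GS}^+$ then $s(G) = 0$ and the bound $|I| \ge 0$ is vacuous, so I would assume $G \in \mathcal{GS}^+$. The base case is $G$ complete: then $\texttt{indep}$ returns a single vertex while $s(G) = 1$ (a complete graph admits at most one side clique), so $|I| = 1 = s(G)$; the empty graph gives $|I| = 0 = s(G)$. For the inductive step, fix an optimal representation $R = (V_0, V_1)$ of $G$, so $s(R, G) = s(G) =: s$, with side cliques $S_1, \dots, S_s$. Let $U'$ denote the set $U$ at the end of the iteration (the argument of the recursive call $\texttt{indep}(G[U'])$) and let $r \in \{1, 2\}$ be the number of vertices added to $I$ during the iteration. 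It then suffices to prove the reduction bound $s(G[U']) \ge s - r$: the induction hypothesis gives $|\texttt{indep}(G[U'])| \ge s(G[U']) \ge s - r$, whence $|I| = r + |\texttt{indep}(G[U'])| \ge s$.

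To prove the reduction bound I would first turn it into a purely combinatorial counting statement. The restriction $R' = (V_0 \cap U', V_1 \cap U')$ is always a representation of $G[U']$, and its side cliques are exactly the nonempty sets $S_\ell \cap U'$. Hence $s(G[U']) \ge s - d$, where $d$ is the number of side cliques of $R$ that are \emph{destroyed}, i.e. contained in $N^+(u_1) \cup N^+(u_2)$ (in a single-removal branch only the closed neighbourhood of the removed vertex appears). Thus the whole lemma reduces to showing $d \le r$ in each of the four branches of the procedure.

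The engine for this is one structural observation. Since $u_1 u_2 \notin E$, at most one of $u_1, u_2$ lies in the clique $V_0$. If $u_1 \in V_0$, then $u_2$ lies in some side clique $S_j$, and because $V_0 \subseteq N^+(u_1)$ the central neighbours of $u_2$ drop out, leaving $U_2 = S_j \setminus N^+(u_1) \subseteq S_j$; so $G[U_2]$ is complete. Symmetrically, $u_2 \in V_0$ forces $G[U_1]$ complete. Taking contrapositives: if $G[U_2]$ is not complete then $u_1$ lies in a side clique, and if $G[U_1]$ is not complete then $u_2$ lies in a side clique. This settles the three branches with an incomplete part at once. In branches 2 and 3 ($r = 1$) the removed vertex lies in a side clique $S_i$, and its removal destroys only $S_i$, since every other side clique is non-adjacent to it; so $d = 1$. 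In branch 4 ($r = 2$) both $u_1$ and $u_2$ lie in distinct side cliques $S_i \ne S_j$, and only these two are destroyed; so $d = 2$.

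The main obstacle is branch 1, where both $G[U_1]$ and $G[U_2]$ are complete, so the observation above gives no control and a removed vertex may be central. Suppose, say, $u_1 \in V_0$ and $u_2 \in S_j$ (the sub-case of both in side cliques is as in branch 4, and both in $V_0$ is impossible). Removing $N^+(u_1)$ deletes all of $V_0$ and could a priori destroy every side clique that $u_1$ dominates, so a naive count fails. The key is to exploit completeness of $G[U_1]$: any destroyed side clique $S_k$ with $k \ne j$ is contained in $N^+(u_1)$, so each of its vertices is adjacent to $u_1$ but not to $u_2$ and hence lies in $U_1$; vertices from two distinct such cliques would be non-adjacent, contradicting that $G[U_1]$ is complete. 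Therefore at most one side clique besides $S_j$ is destroyed, giving $d \le 2 = r$. This closes the last branch and completes the induction; the delicate point, and the step I would write out most carefully, is precisely this use of the completeness of $G[U_1]$ to bound the damage done by deleting a central vertex.
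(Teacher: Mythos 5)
Your proof is correct and takes essentially the same approach as the paper: induction on $v(G)$ via the recursive view of \texttt{indep}, restricting the representation to the surviving vertices, and---in the critical both-complete branch where a central vertex is removed---using completeness of $U_1$ to show that at most one side clique beyond that of $u_2$ can be destroyed. The only cosmetic differences are that you organise the cases by algorithm branch (via the contrapositive of your structural observation) rather than by where the two vertices sit in the representation, and that you fix an optimal representation where the paper argues for an arbitrary one.
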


\begin{proof}
  If $G \notin \mathcal{GS}^+$, then the statement holds, because $s(G) = 0$.
  From now on assume that $G \in \mathcal{GS}^+$.
  We argue by induction on $v(G)$.
  It is trivial to see that the lemma holds for $v(G) = 1$.
  Let $v(G) > 1$ and assume that the lemma holds for smaller graphs.
  If $G$ is complete, then $|I| = 1 = s(G)$.

  Fix an arbitrary unipolar representation $R$ of $G$.
  We show that $|I| \ge s(R, G)$.
  If $G$ is not complete, then the procedure selects two non-adjacent vertices $u$ and $v$.
  The vertices $u$ and $v$ are either in different side cliques
  or one of them is in the central clique and the other is in a side clique.

  We start with the case when $u$ and $v$ are contained in side cliques.
  After inspecting their neighbourhoods,
  the algorithm removes from $U$ either one or both of them along with their neighbourhood.
  Suppose that it removes $r$ of them, where $r$ is $1$ or $2$.
  Let $G^\prime$ and $R^\prime$ be the graph and the representation
  induced by the remaining vertices.
  By the induction hypothesis, if $I^\prime$ is the recursively retrieved set,
  then $|I^\prime| \ge s(R^\prime, G^\prime)$.
  Let $I$ be the independent set returned at the end of the algorithm,
  so that $|I^\prime| + r = |I|$.
  Both $u$ and $v$ see all the vertices in their corresponding side clique,
  and see no vertices from different side cliques,
  so after removing $r$ of them with their neighbours,
  the number of side cliques in the representation decreases by precisely $r$,
  and hence $s(R, G) = s(R^\prime, G^\prime) + r$.
  Now $|I| = |I^\prime| + r \ge s(R^\prime, G^\prime) + r = s(R, G)$.

  Now w.l.o.g. assume that $u$ belongs to a side clique and $v$ belongs to the central
  clique.
  Then $N^+(u)$ contains the side clique of $u$ and perhaps parts of the central clique.
  Therefore, $N^+(u) \setminus N^+(v)$ is a subset of the side clique of $u$,
  and hence it is a clique.
  If $N^+(v) \setminus N^+(u)$ is a not clique, then the algorithm continues
  recursively with $G[V \setminus N^+(u)]$;
  and using the same arguments as above with $r = 1$, we guarantee correct behaviour.
  Now assume that $N^+(v) \setminus N^+(u)$ is a clique.
  Then $N^+(v) \setminus N^+(u)$ can intersect at most one side clique,
  because the vertices in different side cliques are not adjacent.
  In this case $N^+(v) \cup N^+(u)$ completely covers the side clique of $u$,
  completely covers the central clique,
  and it may intersect one additional side clique.
  Hence $s(R, G) = s(R^\prime, G^\prime) + 1$ or $s(R, G) = s(R^\prime, G^\prime) + 2$,
  where $G^\prime$ and $R^\prime$ are the induced graph and representation after
  the removal of $N^+(v) \cup N^+(u)$.
  If $I^\prime$ is the recursively obtained independent set,
  from the induction hypothesis we deduce that
  $|I| = |I|^\prime + 2 \ge s(R^\prime, G^\prime) + 2 \ge s(R, G)$.
\end{proof}

\subsection{Time Complexity}

In this form the algorithm takes more than $O(n^2)$ time, because checking
whether an induced subgraph is complete is slow.
However, we can maintain a set of vertices, $C$, which we have seen to induce a complete
graph.
We will create an efficient procedure to check if a subgraph is complete,
and to return some additional information to be used for future calls
if the subgraph is not complete.

\input{antiedge.alg}

The following lemma summarises the behaviour of $\texttt{antiedge}$.

\begin{lemma}
  \label{corr_antiedge}
  Let $C \subseteq U$ and suppose that $G[C]$ is complete.
  If $G[U]$ is complete, then $\textnormal{\texttt{antiedge}}(G, U, C)$ returns $(False, U)$;
  if not, then it returns
  $(uv, C^\prime)$ such that
  \begin{enumerate}
    \item
      $uv \in C^\prime \times (U \setminus C^\prime) - E(G)$
      i.e. $u \in C^\prime$, $v \in U \setminus C^\prime$, $uv \notin E(G)$
    \item
      $C \subseteq C^\prime \subseteq U$
    \item
      $G[C^\prime]$ is complete
  \end{enumerate}
\end{lemma}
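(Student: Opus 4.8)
The plan is to treat $C'$ as a loop invariant and then split on whether $G[U]$ is complete. First I would establish the running invariant that at every point of the execution $C \subseteq C' \subseteq U$ and $G[C']$ is complete. The inclusions are immediate: $C'$ is initialised to $C$ and only ever grows by adjoining vertices drawn from the fixed set $U \setminus C$, so it always satisfies $C \subseteq C' \subseteq C \cup (U \setminus C) = U$. For completeness I would argue by induction on the iterations. Initially $G[C]$ is complete by hypothesis. A vertex $v$ is adjoined to $C'$ only when the test $C' \setminus N^+(v) = \emptyset$ succeeds, i.e.\ when $C' \subseteq N^+(v)$, which says precisely that $v$ is adjacent to every vertex currently in $C'$; since $v \in U \setminus C$ has not been processed before, it is not yet in $C'$ at that moment, so $C' \cup \{v\}$ is again a clique. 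This gives the third claim of the lemma for free in the non-complete case.

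For the first assertion, suppose $G[U]$ is complete. Then for each $v \in U \setminus C$ and each vertex already in $C' \subseteq U$ (necessarily distinct from $v$), the two are adjacent, so $C' \subseteq N(v) \subseteq N^+(v)$ and the test succeeds. Hence every $v \in U \setminus C$ is adjoined, the loop runs to completion with $C' = C \cup (U \setminus C) = U$, and the procedure returns $(False, U)$, exactly as claimed.

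For the second assertion, suppose $G[U]$ is not complete. The key step is to rule out the loop finishing without returning an anti-edge. If the loop did run to completion, every $v \in U \setminus C$ would have been adjoined, forcing $C' = U$ at the end; but the invariant guarantees $G[C']$ is complete, contradicting the assumption on $G[U]$. Therefore the procedure must exit through the \textbf{else} branch for some $v \in U \setminus C$ whose test failed, returning $(uv, C')$. The third claim is the invariant, and the second claim is the invariant inclusion. For the first claim, the chosen $u$ lies in $C' \setminus N^+(v)$, so $u \neq v$ and $uv \notin E(G)$; moreover $v$ failed the test and so was not adjoined, giving $v \in U \setminus C'$, while $u \in C'$ by choice, which is precisely $uv \in C' \times (U \setminus C') - E(G)$.

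I expect the only point needing genuine care is the second assertion, specifically excluding the possibility that the loop terminates via the final \texttt{return}$(False, C')$ when $G[U]$ is not complete. This is where the completeness invariant does the real work: it converts ``all vertices of $U \setminus C$ were adjoined'' into ``$C' = U$ induces a clique'', contradicting the hypothesis. The remaining bookkeeping (the inclusions, the sign conditions on $u$ and $v$, and $v \notin C'$ at a failing iteration) is a direct reading of the pseudocode.
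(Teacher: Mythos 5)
Your proof is correct: the loop invariant ($C \subseteq C' \subseteq U$, $G[C']$ complete, and the processed vertex never already in $C'$) together with the case split on whether $G[U]$ is complete verifies every claim of the lemma. The paper itself dismisses this as ``easy checking,'' and your argument is exactly that routine verification spelled out in full, so it matches the paper's (implicit) approach.
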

\begin{proof}
  Easy checking.
\end{proof}

\input{mod_indep.alg}

\begin{lemma}
  \label{invariants_indep}
  Let $U$ and $C$ be the sets stored in the respective variables at the
  beginning of an iteration of the main loop of the modified $\textnormal{\texttt{indep}}$,
  and let $U^\prime$ and $C^\prime$
  be the sets stored at the beginning of the next iteration,
  if the algorithm does not terminate meanwhile.
  The following loop invariants hold:
  \begin{enumerate}
  \item
    [(I1)]
    $G[C]$ is complete and $C \subseteq U$,
  \item
    [(I2)]
    $U^\prime \setminus C^\prime \subseteq U \setminus C$.
  \end{enumerate}
\end{lemma}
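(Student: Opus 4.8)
The plan is to verify both invariants simultaneously by induction on the loop iterations, taking the pre-loop assignment $(e,C) := \texttt{antiedge}(G,U,\emptyset)$ as the base case and the four branches of the \texttt{else}-clause as the inductive step. The whole argument rests on Lemma~\ref{corr_antiedge} together with one auxiliary fact that the statement does not mention but that the code forces: since $e$ and $C$ are always assigned as the \emph{joint} output of a single \texttt{antiedge} call (the pre-loop call, or $(e_1,C_1)$, or $(e_2,C_2)$, or a fresh call with seed $\emptyset$), part~1 of Lemma~\ref{corr_antiedge} guarantees that whenever $e = u_1u_2 \neq \texttt{False}$ at the start of an iteration its first endpoint satisfies $u_1 \in C$. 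I would record this as a third invariant and thread it through the induction, since it is precisely what drives (I2).

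For (I1) the base case is immediate: $\emptyset \subseteq U$ and $G[\emptyset]$ is complete, so Lemma~\ref{corr_antiedge} returns $C$ with $G[C]$ complete and $C \subseteq U$. For the inductive step I split on the branch. In the two symmetric branches (both $e_1,e_2 = \texttt{False}$, or both $\neq \texttt{False}$) the pair $(e,C')$ is recomputed by $\texttt{antiedge}(G,U',\emptyset)$, so (I1) for the next iteration follows exactly as in the base case. In the two asymmetric branches we set $C' := C_i$ for one $i \in \{1,2\}$, where $(e_i,C_i) = \texttt{antiedge}(G,U_i,C\cap U_i)$; the seed $C\cap U_i$ induces a complete subgraph because $G[C]$ does (by the inductive (I1)) and $C\cap U_i \subseteq C$, so the call is legitimate and Lemma~\ref{corr_antiedge} yields $G[C_i]$ complete with $C_i \subseteq U_i$. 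The containment $C_i \subseteq U'$ then drops out of $U_i \subseteq N^+(u_i)\setminus N^+(u_{3-i})$ and the shape of the deletion $U' = U \setminus N^+(u_{3-i})$.

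For (I2) the first move is to rewrite the goal: since $U' \subseteq U$ in every branch, $U'\setminus C' \subseteq U\setminus C$ is equivalent to $C \cap U' \subseteq C'$. Next I combine the auxiliary fact $u_1 \in C$ with (I1) to get $C \subseteq N^+(u_1)$, as every vertex of the complete set $C$ is either $u_1$ or a neighbour of it. In the three branches that delete $N^+(u_1)$ from $U$ this gives $C \cap U' = \emptyset$, so $C \cap U' \subseteq C'$ holds vacuously. The single remaining branch is the asymmetric case that keeps $u_1$ and deletes only $N^+(u_2)$, setting $C' = C_1$; here I compute $C \cap U' = C \setminus N^+(u_2) \subseteq (N^+(u_1)\setminus N^+(u_2)) \cap U = U_1$, whence $C \cap U' \subseteq C \cap U_1 \subseteq C_1 = C'$, the last inclusion being part~2 of Lemma~\ref{corr_antiedge} applied with seed $C \cap U_1$.

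The main obstacle I anticipate is exactly this last asymmetric branch: it is the only place where a vertex of $C$ can survive into $U'$, hence the only case in which (I2) has genuine content, and it requires chaining $C \setminus N^+(u_2) \subseteq U_1$ into the \texttt{antiedge} guarantee $C \cap U_1 \subseteq C_1$. The subtle point throughout is that all of this leans on $u_1 \in C$, which is not part of the lemma as stated; the cleanest route is to isolate that fact as an extra invariant at the outset rather than to rediscover it inside the case analysis.
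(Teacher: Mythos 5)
Your proof is correct, and its skeleton matches the paper's: both argue by induction over the iterations, both establish (I1) in the same way (the seeds $\emptyset$ and $C \cap U_i$ satisfy the precondition of Lemma~\ref{corr_antiedge}, whose conclusion restores (I1) for the next iteration), and both ultimately rest on the seeding guarantee $C \cap U_i \subseteq C_i$. Where you genuinely diverge is in the organisation of (I2), and your organisation is tighter. The paper splits into ``one vertex removed'' versus ``two vertices removed'', treats the one-vertex case ``w.l.o.g.\ $u_1$ is excluded'', and runs a set-algebraic chain hinging on the claim $C \subseteq N^+(u_1) \cup N^+(u_2)$ --- a claim it asserts without justification, and whose only proof is exactly your auxiliary invariant $u_1 \in C$ (part 1 of Lemma~\ref{corr_antiedge} applied to the call that produced the stored pair $(e,C)$) combined with completeness of $G[C]$. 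You promote that fact to an explicit third invariant, deduce the stronger asymmetric statement $C \subseteq N^+(u_1)$, rewrite (I2) as $C \cap U' \subseteq C'$, and thereby render three of the four branches vacuous, since in those branches $N^+(u_1)$, and hence all of $C$, is deleted from $U$; only the branch that keeps $u_1$ and deletes $N^+(u_2)$ has content, and there your chain $C \cap U' \subseteq C \cap U_1 \subseteq C_1 = C'$ is exactly right. What your route buys: it supplies the justification the paper omits, and it localises where the seeding property of \texttt{antiedge} is truly needed (one branch only, whereas the paper's symmetric computation invokes it in both one-vertex branches). What the paper's route buys: the symmetric fact lets it dispatch both one-vertex branches with a single ``w.l.o.g.'', at the cost of obscuring the asymmetry between $u_1$ and $u_2$. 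Both arguments are sound; yours is the more rigorous write-up of the same underlying idea.
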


A loop invariant is a condition which is true at the beginning of each iteration of a loop.

\begin{proof}
  Observe that the initial values of $U$ and $C$, which are $V$ and $\emptyset$ respectively,
  guarantee by Lemma \ref{corr_antiedge} that the values after the call to
  $\texttt{antiedge}$ satisfy condition (I1).
  Therefore (I1) holds for the first iteration.
  Concerning future iterations,
  observe that (I1) guarantees
  the precondition of Lemma \ref{corr_antiedge}, which it turn guarantees
  (I1) for the next iteration.
  We deduce that (I1) does indeed give a loop invariant.
  By proving this we have proved that the preconditions of Lemma \ref{corr_antiedge}
  are always met; and so we can use Lemma \ref{corr_antiedge} throughout.

  If $e = False$, then there is no next iteration,
  hence condition (I2) is automatically correct.
  Now assume that $e = u_1u_2$.
  Depending on $e_1$ and $e_2$ there are two cases for how many vertices are excluded.
  Case 1: one vertex is excluded.
  W.l.o.g. assume that $u_1$ is excluded,
  so $U^\prime = U \setminus N^+(u_1)$, $C \cap U_2 \subseteq C^\prime$
  and $C \subseteq N^+(u_1) \cup N^+(u_2)$.
  Then
  \begin{align*}
    U^\prime \setminus C^\prime
    &\subseteq U^\prime \setminus (C \cap U_2) \\
    &= (U \setminus N^+(u_1)) \setminus (C \cap (N^+(u_2) \setminus N^+(u_1)) \cap U)\\
    &= U \setminus [N^+(u_1) \cup (C \cap (N^+(u_2) \setminus N^+(u_1)))]\\
    &= U \setminus [N^+(u_1) \cup C] \subseteq U \setminus C.
  \end{align*}
  Case 2: two vertices are excluded.
  Now $U^\prime = U \setminus (N^+(u_1) \cup N^+(u_2))$, and
  $$
  U^\prime \setminus C^\prime \subseteq U^\prime \subseteq U \setminus C.
  $$
  We have shown that condition (I2) holds at the start of the next iteration,
  and so it gives a loop invariant as claimed.
\end{proof}

A vertex $v$ is \emph{absorbed} if it is processed during the
loop of $\texttt{antiedge}$ and then appended to the result set, $C^\prime$.

\begin{corr}
  \label{absorbed_vetrices}
  A vertex can be absorbed once at most.
  \begin{proof}
    Let $U, C, U^\prime$ and $C^\prime$ be as before.
    Observe that if, during the iteration, vertex $v$ is absorbed in a call to
    $\texttt{antiedge}$,
    then $v \in U \setminus C$ and $v \notin U^\prime \setminus C^\prime$.
    The Corollary now follows from the second invariant in Lemma \ref{invariants_indep}.
  \end{proof}
\end{corr}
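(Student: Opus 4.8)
The plan is to deduce the Corollary from the monotonicity invariant (I2) of Lemma \ref{invariants_indep}, which says $U' \setminus C' \subseteq U \setminus C$ for consecutive iterations. The guiding idea is that a vertex is absorbed precisely when it passes out of $U \setminus C$ (``not yet certified to lie in a clique'') into the certified clique; and (I2) says $U \setminus C$ only ever shrinks, so a vertex can leave it at most once.

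Concretely, I would fix one iteration of the main loop, with $U,C$ its starting values and $U',C'$ the starting values of the next iteration, and establish the single observation: \emph{if $v$ is absorbed in any call to} \texttt{antiedge} \emph{made during this iteration, then $v \in U \setminus C$ and $v \notin U' \setminus C'$.} Granting this, the Corollary follows quickly: writing $U^{(i)},C^{(i)}$ for the values at the start of iteration $i$, repeated application of (I2) makes the sets $U^{(i)} \setminus C^{(i)}$ a nested decreasing chain, so if $v$ were absorbed in two distinct iterations $i<j$ the later absorption would force $v \in U^{(j)}\setminus C^{(j)} \subseteq U^{(i+1)} \setminus C^{(i+1)}$, contradicting that the earlier absorption gave $v \notin U^{(i+1)} \setminus C^{(i+1)}$. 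A vertex also cannot be absorbed twice within one iteration, since the sets $U_1,U_2$ and (in the two-exclusion branches) $U'$ processed by the three \texttt{antiedge} calls are pairwise disjoint.

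To prove the observation I would split on which call absorbs $v$. Each call \texttt{antiedge}$(G,S,D)$ only absorbs vertices of $S\setminus D$, so for the ``$v\in U\setminus C$'' half it suffices to note $U_1,U_2\subseteq U$ and, for the fresh call \texttt{antiedge}$(G,U',\emptyset)$, that $U' \subseteq U\setminus C$ --- the latter because one endpoint of $e$ lies in the clique $C$, whence $C \subseteq N^+(u_1)\cup N^+(u_2)$ (exactly as in the proof of Lemma \ref{invariants_indep}) and $U' = U\setminus(N^+(u_1)\cup N^+(u_2))$ therefore misses $C$. For the ``$v\notin U'\setminus C'$'' half I distinguish whether $v$ lands in a retained or a discarded clique: if $v$ is absorbed into the clique that becomes $C'$ (the fresh call, or the $C_i$ kept in a one-exclusion branch) then $v\in C'$; and if $v$ is absorbed into a discarded clique $C_i$, then $v\in U_i\subseteq N^+(u_i)$ while that branch sets $U'\subseteq U\setminus N^+(u_i)$, so $v\notin U'$. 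Either way $v\notin U'\setminus C'$.

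I expect the only real bookkeeping obstacle to be this last case split: one must check uniformly that in every branch of \texttt{indep} an absorbed vertex either enters the retained certified clique $C'$ or is physically deleted from $U$ together with the closed neighbourhood of an excluded endpoint, and must separately fold in the preliminary \texttt{antiedge} call before the loop (whose absorptions land in the initial $C$, hence lie outside $U\setminus C$ from the outset). None of this is deep; it is a matter of matching each branch against the two escape routes from $U\setminus C$, after which (I2) does all the work.
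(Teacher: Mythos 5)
Your proposal is correct and follows exactly the paper's own argument: the paper's proof consists of precisely your key observation (a vertex absorbed during an iteration lies in $U \setminus C$ but not in $U' \setminus C'$) combined with invariant (I2) of Lemma \ref{invariants_indep}. Your write-up merely fills in the branch-by-branch verification, the within-iteration disjointness, and the preliminary \texttt{antiedge} call, which the paper compresses into the word ``Observe''.
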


\begin{lemma}
  \label{time_indep}
  The procedure
  $\textnormal{\texttt{indep}}(G)$ using
  $\textnormal{\texttt{antiedge}}$ takes $O(n^2)$ time.
\end{lemma}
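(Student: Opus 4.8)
The plan is to bound separately the work done inside the calls to \texttt{antiedge} and the work done directly in the body of the main loop, treating the former by an amortised (charging) argument. First I would observe that the main loop runs at most $n$ times: whenever it does not terminate it deletes from $U$ the set $N^+(u_1)$ and/or $N^+(u_2)$, each of which contains the corresponding vertex of $U$, so $|U|$ strictly decreases on every iteration. Hence there are $O(n)$ iterations, and since each iteration issues at most three calls to \texttt{antiedge} (on $U_1$, on $U_2$, and possibly on $U$ with empty third argument), there are $O(n)$ such calls in total; the single initial call contributes only $O(n)$.

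Next I would account for the per-iteration work that is \emph{not} inside \texttt{antiedge}. Using the assumed costs for set operations together with $|N^+(\cdot)|, |U|, |C| \le n$, computing $U_1$ and $U_2$ costs $O(n)$, forming $C \cap U_1$ and $C \cap U_2$ costs $O(n)$, and updating $U$ by removing $N^+(u_1) \cup N^+(u_2)$ costs $O(n)$, while the updates to $I$ are $O(1)$. Thus each iteration contributes $O(n)$ of such work, for a total of $O(n^2)$.

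The main work, and the crux of the argument, is bounding the aggregate cost of the \texttt{antiedge} calls. Within one call the expensive step is evaluating $C' \setminus N^+(v)$, which by the set-operation assumption costs $O(|C'|) = O(n)$ and is performed once for each vertex $v$ that the loop of \texttt{antiedge} processes. Every processed vertex is either absorbed or is the single vertex on which an anti-edge is returned (at most one per call). Here I would invoke Corollary \ref{absorbed_vetrices}: a vertex is absorbed at most once over the whole execution, so the total number of absorptions across all calls is at most $n$; combined with at most one returned-anti-edge vertex per call over $O(n)$ calls, the total number of processed vertices is $O(n)$. Charging $O(n)$ to each processed vertex yields $O(n^2)$. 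The fixed per-call overhead (the assignment $C' := C$ and setting up the iteration over $U \setminus C$) is $O(n)$, contributing a further $O(n^2)$ over all calls. Summing the three contributions gives the claimed $O(n^2)$ bound.

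I expect the main obstacle to be precisely this amortisation. A single call to \texttt{antiedge} may process many vertices and so can cost $\Omega(n^2)$ by itself, so a naive per-call estimate only gives $O(n^3)$. The resolution is that the costly absorptions form a global resource bounded by $n$, which is exactly the content of Corollary \ref{absorbed_vetrices}, itself resting on the nesting invariant (I2) of Lemma \ref{invariants_indep}. Tying the cost to the total number of absorptions rather than to individual calls is what brings the running time down to $O(n^2)$.
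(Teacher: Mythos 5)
Your proof is correct, and on the crucial step it takes a genuinely different (and in fact simpler) route than the paper. Both proofs handle the easy parts identically: the work outside \texttt{antiedge} is $O(n)$ per iteration over $O(n)$ iterations, and the absorbed vertices are charged via Corollary \ref{absorbed_vetrices} at $O(n)$ each for a total of $O(n^2)$. The difference is in accounting for vertices that are processed in \texttt{antiedge} but \emph{not} absorbed. The paper fixes such a vertex $v$ and follows it across subsequent calls: since the returned anti-edge $vu$ causes either $v$ or $N^+(u) \supseteq C_1$ to be deleted from $U$, the sets $C_1, C_2, C_3, \ldots$ that $v$ is tested against in successive calls are pairwise disjoint, so the lifetime cost of $v$ is $O(n)$, summing to $O(n^2)$ over all vertices. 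You instead exploit two structural facts the paper never uses: each call to \texttt{antiedge} returns immediately upon meeting its first non-absorbed vertex, so there is at most one such processing event per call; and the main loop runs $O(n)$ times with at most three calls per iteration, so there are only $O(n)$ calls in total. This gives $O(n)$ non-absorbed events at $O(n)$ each, with no need for the disjointness argument. Your count is coarser per vertex (a single vertex may still be charged in many calls, once per call) but globally just as good, and it avoids the most delicate part of the paper's analysis; the paper's finer per-vertex argument would only become necessary if \texttt{antiedge} kept processing vertices after finding an anti-edge, which it does not.
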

\begin{proof}
  The total running time of each iteration of the main loop of $\texttt{indep}(G)$
  besides calling $\texttt{antiedge}$ is $O(n)$.
  The set $U$ decreases by at least one vertex on each iteration,
  so the time spent outside of $\texttt{antiedge}$ is $O(n^2)$.

  From Corollary \ref{absorbed_vetrices} at most $n$ vertices are absorbed
  and $O(n)$ steps are performed each time,
  so in total $O(n^2)$ time is spent in all calls to $\texttt{antiedge}$
  for absorbing vertices.

  Assume that $v$ is processed in $\texttt{antiedge}$ for the first time,
  but it is not absorbed and it is tested against a set $C_1$.
  Since $v$ is not absorbed,
  we may assume that $\texttt{antiedge}$ has returned the pair of vertices $vu$.
  At least one of $u$ and $v$ is removed (along with its neighbourhood) from $U$
  and moved to $I$.
  If $v$ is removed from $U$, then no more time can be spent on it in $\texttt{antiedge}$,
  hence the total time spent on $v$ in $\texttt{antiedge}$ is $O(n)$.
  Now assume that $u$ is removed.
  We have that $C_1 \subseteq N^+(u)$ and each vertex in $N^+(u)$ is removed from $U$.
  Hence, if $v$ is processed again in $\texttt{antiedge}$,
  it will be tested against a set $C_2$ with $C_1 \cap C_2 = \emptyset$,
  and therefore $|C_1| + |C_2| = |C_1 \cup C_2| = O(n)$.
  As we saw before, if $v$ is absorbed or removed from $U$,
  then it cannot be processed again in $\texttt{antiedge}$;
  and thus the running time spent on $v$ is again $O(n)$.
  If $v$ is not removed from $U$, then $C_2$ is removed from $U$.
  Hence, if $v$ is processed again in $\texttt{antiedge}$,
  $v$ will be tested against a set $C_3$ with
  $|C_1| + |C_2| + |C_3| = |C_1 \cup C_2 \cup C_3| = O(n)$, and so on.
  Thus, we see that over all these tests,
  each vertex is tested at most once for adjacency to $v$,
  and so the total time spent on $v$ is $O(n)$.
\end{proof}

\section{Building Blocks and Recognition}
\label{sec:blocks}
\subsection{Block Creation Algorithm}

In this subsection we present a short algorithm for creating a partition of $V(G)$ using
an independent set $I$ and then checking if this partition is a block decomposition
using the procedure \texttt{verify} from Section $2$.

\input{test.alg}
\begin{lemma}
  \label{corr_test}
  Suppose that $I \subseteq V(G)$ is an independent set with $|I| \ge s(G) - 1$ and
  $V_0 \cap I = \emptyset$ for some unipolar representation $R = (V_0, V_1)$ of $G$.
  Then $\textnormal{\texttt{test}}(G, I)$ returns $True$.
\end{lemma}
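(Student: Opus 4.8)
The plan is to show that the partition $\mathcal{B}$ built by \texttt{test} is a block decomposition with respect to the very representation $R = (V_0, V_1)$ supplied in the hypothesis; the earlier 2-SAT lemma then immediately gives that $\texttt{verify}(G,f)$, and hence $\texttt{test}(G,I)$, returns $True$. So the whole argument reduces to verifying the defining condition of a block decomposition: that each part of $\mathcal{B}$ meets $V_1$ in a side clique or not at all.

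First I would record the structural consequences of the hypotheses. Since $V_0 \cap I = \emptyset$, every $i \in I$ lies in some side clique $S_i$ of $R$. Because $I$ is independent while any two vertices of a common side clique are adjacent, the cliques $S_i$ $(i \in I)$ are pairwise distinct. As $I$ thus hits $|I|$ distinct side cliques, and as $s(R,G) \le s(G) \le |I| + 1$, the set $I$ meets every side clique of $R$ except at most one; call the possible exceptional one $S^*$. I would also isolate the key non-interference fact: if $w \in S_i$ and $j \in I$ with $j \ne i$, then $w \notin N^+(j)$, since $w$ and $j$ lie in different side cliques (hence are non-adjacent) and $w \ne j$; the same holds for any $w \in S^*$ and any $j \in I$.

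Next I would analyse the blocks produced by the loop. Fix $i \in I$ and let $B_i = N^+(i) \cap U$ be the block formed when $i$ is processed. The non-interference fact shows that no vertex of $S_i$ has been removed from $U$ by an earlier element of $I$, so $S_i \subseteq U$ at that moment and hence $S_i \subseteq B_i$; on the other hand the only $V_1$-neighbours of $i$ lie in $S_i$, so $B_i \cap V_1 \subseteq N^+(i) \cap V_1 = S_i$. Combining these gives $B_i \cap V_1 = S_i$, a side clique. For the final leftover block $B_{\mathrm{last}}$, the non-interference fact applied to $S^*$ shows every vertex of $S^*$ survives in $U$ to the end, while every other side clique is entirely consumed by the corresponding $B_i$; thus $B_{\mathrm{last}} \cap V_1$ equals $S^*$, or is empty if $I$ already meets every side clique. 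In all cases each part of $\mathcal{B}$ meets $V_1$ in a side clique or in $\emptyset$, so $\mathcal{B}$ is a block decomposition with respect to $R$, and the 2-SAT lemma closes the argument.

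I expect the main obstacle to be the block analysis, specifically the claim $B_i \cap V_1 = S_i$: one must argue carefully that the closed-neighbourhood deletions carried out for earlier elements of $I$ never remove a vertex of $S_i$ (so the whole side clique lands in $B_i$) and simultaneously that $B_i$ cannot swallow part of another side clique. Both directions rest on the single observation that $I$ places exactly one vertex in each relevant side clique and that distinct side cliques are mutually non-adjacent; the pigeonhole bound $|I| \ge s(G)-1$ is precisely what guarantees that at most one side clique escapes being hit, and hence that the leftover block is itself a single side clique rather than a union of several.
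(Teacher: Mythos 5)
Your proof is correct and follows essentially the same route as the paper's: show that each block $N^+(i)\cap U$ meets $V_1$ in exactly the side clique of $i$, use $|I|\ge s(G)-1$ to conclude that at most one side clique is unrepresented and survives intact into the final leftover block, and then invoke the 2-SAT lemma. Your explicit ``non-interference'' observation and the two-sided containment $S_i \subseteq B_i \cap V_1 \subseteq S_i$ are just a more careful writing of the paper's argument, not a different one.
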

\begin{proof}
  On each step of the main loop a vertex from $i \in I$ is selected.
  Since $V_0 \cap I = \emptyset$, the vertex $i$ is a part of some side clique, say $C$.
  Now $C \cap N^+(j) = \emptyset$ for each $j \in I \setminus \{i\}$,
  so $C \subseteq U$.
  Also $C \subseteq N^+(i)$, and hence $C \subseteq N^+(i) \cap U$.
  Vertex $i$ does not see vertices from other side cliques, so $N^+(i) \cap U$
  is correctly marked as a separate block.

  Since $|I| \ge s(G) - 1$, at most one side clique is not represented in $I$.
  If there is an unrepresented side clique, say $C$,
  then none of the previously created blocks can claim any vertex from it,
  and hence $C \subseteq U$.
  We have shown that when the main loop ends, either $U \cap V_1 = \emptyset$
  or $U \cap V_1$ is a side clique;
  so $U$ is correctly marked as a separate block.
  The set $U$ also contains all remaining vertices, so $f$ is partition of $V$ into blocks,
  and hence $\texttt{verify}(G, f)$ will return $True$.
\end{proof}

\subsection{Block Decomposition Algorithm}

By Lemma and \ref{corr_indep_maximal} and \ref{corr_indep},
$\texttt{indep}(G)$ returns a maximal independent set $I$ of size at least $s(G)$.
Thus,
Lemma \ref{corr_test} suggests a naive algorithm for recognition for $\mathcal{GS}^+$ --
try $\texttt{test}(G, I \setminus i)$ for each $i \in I$ and return $True$
if any attempt succeeds.
The proposed algorithm is correct, since $|I \cap V_0| \le 1$.
The running time is $O(|I|n^2) = O(n^3)$, while we aim for $O(n^2)$.
However, with relatively little effort we can localise $I \cap V_0$ to at most
$2$ candidates from $I$.

\input{blocks.alg}
\paragraph{Procedure $\texttt{recognise}(G)$:}
\begin{algorithmic}
  \State \Return $\texttt{blocks}(G, \texttt{indep}(G))$
\end{algorithmic}

\subsection{Correctness}
\begin{lemma}
  \label{corr_blocks}
  The procedure $\textnormal{\texttt{recognise}}(G)$ returns $True$
  iff $G \in \mathcal{GS}^+$.
\end{lemma}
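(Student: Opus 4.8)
The plan is to prove the two implications separately, with the forward (soundness) direction being routine and the reverse (completeness) direction carrying all the work. For soundness, suppose \texttt{recognise}$(G)$ returns $True$. By inspection of \texttt{blocks}, this can only happen because some call $\texttt{test}(G,\cdot)=\texttt{verify}(G,f)$ returns $True$, i.e. the associated formula $\psi$ is satisfiable; by the lemma of \S\ref{sec:verify} characterising \texttt{verify}, the corresponding partition is then a block decomposition for some representation of $G$, so $G$ has a unipolar representation and $G\in\mathcal{GS}^+$. For completeness, assume $G\in\mathcal{GS}^+$. By Lemmas \ref{corr_indep_maximal} and \ref{corr_indep}, $I:=\texttt{indep}(G)$ is a maximal independent set with $|I|\ge s(G)$, and since $V_0$ is a clique and $I$ is independent, every representation $R=(V_0,V_1)$ satisfies $|I\cap V_0|\le 1$. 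Since \texttt{blocks} only ever calls \texttt{test} on $I$ or on $I$ with one vertex removed, each such $I'$ has $|I'|\ge |I|-1\ge s(G)-1$, so the size hypothesis of Lemma \ref{corr_test} is automatic; the whole task reduces to exhibiting, for one of the actual calls, a representation $R=(V_0,V_1)$ with $V_0\cap I'=\emptyset$, whereupon Lemma \ref{corr_test} forces that call to return $True$.

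The key structural claim is that for every representation $R=(V_0,V_1)$ with $I\cap V_0=\{c\}$, the vertex $c$ lies in the set $C$ computed by \texttt{blocks}. I would prove this by showing $c\in N^+(v)$ for every \emph{witness} $v$ (a vertex with $|N^+(v)\cap I|=2$), so that $c$ survives every intersection $C:=C\cap N^+(v)$. If $v\in V_0$, then $v$ and $c$ lie in the same clique and $c\in N^+(v)$; if $v$ lies in a side clique, then the two $I$-vertices of $N^+(v)$ cannot both be side-clique vertices (distinct side cliques are non-adjacent, and a side clique meets $I$ at most once), so one of them is central, hence equal to $c$, giving $c\in N^+(v)$.

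With the claim in hand I split into two cases. If some representation has $I\cap V_0=\emptyset$, then every $I'\subseteq I$ with $|I'|\ge s(G)-1$ satisfies $V_0\cap I'=\emptyset$; as each branch of \texttt{blocks} calls \texttt{test} on such an $I'$ (removing at most one vertex of $I$), Lemma \ref{corr_test} returns $True$. Otherwise every representation meets $I$ in a single central vertex; fixing $R$ with $s(R,G)=s(G)$ and central vertex $c\in I$, the claim gives $c\in C$, so $|C|\ge 1$. It then suffices to show $|C|\le 2$: if $|C|=1$ then $C=\{c\}$ and \texttt{blocks} calls $\texttt{test}(G,I\setminus\{c\})$, while if $|C|=2$ then $c$ is one of the two candidates tried; either way some call removes exactly $c$, leaving a set that avoids $V_0$ and has size $|I|-1\ge s(G)-1$, so Lemma \ref{corr_test} returns $True$.

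The main obstacle is establishing $|C|\le 2$ in this second case, equivalently that $|C|\ge 3$ forces the first case. The lever is again the witness mechanism: if $a,b\in C\setminus\{c\}$ are distinct they lie in distinct side cliques, so their only possible $I$-neighbour is $c$; hence if $b\sim c$ then $b$ is itself a witness with $N^+(b)\cap I=\{b,c\}$, and $a\in C$ would force $a\in N^+(b)$, i.e. $a\sim b$, which is impossible across side cliques. Thus $|C|\ge 3$ forces $c$ to be non-adjacent to every vertex of $C\setminus\{c\}$, and a parallel witness argument confines all side-clique neighbours of $c$ to a single side clique unhit by $I$ (of which there is at most one, since $|I|\ge s(G)$). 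Using this together with the maximality $s(R,G)=s(G)$, I would relocate $c$ out of $V_0$ — as a singleton side clique, or by absorbing it into the unique unhit side clique — to produce a representation with $I\cap V_0=\emptyset$, contradicting the standing assumption of the second case. Pinning down this relocation, in particular ruling out that $c$ is adjacent to a proper nonempty part of the unhit side clique, is the delicate point and is where I expect the bulk of the care to be needed.
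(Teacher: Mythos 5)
Your proposal follows the paper's proof almost step for step: the soundness direction via the \texttt{verify} lemma, the witness argument showing that any central vertex $c \in I\cap V_0$ survives every intersection and hence lies in $C$, the observation that when $|C|\le 2$ some call to \texttt{test} removes exactly $c$, and the relocation of $c$ out of the central clique when $|C|\ge 3$. All of that is correct as you state it. The problem is the step you yourself flag as ``the delicate point'': ruling out that $c$ is adjacent to a proper nonempty part of the unhit side clique $S$. You leave it unresolved, and it is a genuine gap in the argument --- but the missing idea is an ingredient you quote in your first sentence of the completeness direction and then never use, namely that $I$ is a \emph{maximal} independent set (Lemma \ref{corr_indep_maximal}).

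The gap closes in one sentence. Let $s\in S$. By maximality of $I$, $s$ has a neighbour in $I$. Every vertex of $I\setminus\{c\}$ lies in a side clique different from $S$ (that is what ``$S$ is unhit'' means, using $I\cap V_0=\{c\}$), and distinct side cliques span no edges, so this neighbour can only be $c$. Hence $S\subseteq N(c)$: the vertex $c$ is adjacent to \emph{all} of $S$, never to a proper nonempty part, and $S\cup\{c\}$ is a clique. Together with your witness argument that $N(c)\cap V_1\subseteq S$ when $|C|\ge 3$ (no witnesses can exist, since any witness $v$ would force $|C|\le|N^+(v)\cap I|=2$), this makes $(V_0\setminus\{c\},\,V_1\cup\{c\})$ a representation whose central clique avoids $I$, which is exactly what you need. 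A cosmetic remark: the paper does not argue by contradiction here; it applies Lemma \ref{corr_test} with this new representation to the full set $I$, which is precisely what the branch $|C|>2$ of \texttt{blocks} calls. Your contradiction framing works equally well once the relocation is secured, but without invoking maximality of $I$ the relocation cannot be pinned down at all.
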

\begin{proof}
  First assume that $G \in \mathcal{GS}^+$ and
  let $R = (V_0, V_1)$ be an arbitrary representation of $G$.
  Let $I = \textnormal{\texttt{indep}}(G)$.
  By Lemma \ref{corr_indep},
  $|I| \ge s(G) \ge s(R, G)$.
  Since $V_0$ is a clique and $I$ is an independent set,
  we have $|V_0 \cap I| \le 1$.

  Case $1$: $V_0 \cap I = \emptyset$.
  Observe that $\texttt{blocks}$ returns $\texttt{test}(G, I^\prime)$,
  where $I^\prime$ is either $I$ or $I \setminus \{v\}$ for some $v \in I$,
  hence $|I^\prime| \ge |I| - 1 \ge s(G) - 1$;
  $I^\prime \subseteq I \subseteq V_1$,
  so $\texttt{test}(G, I^\prime) = True$ from Lemma \ref{corr_test}.

  Case $2$: $V_0 \cap I = \{c\}$.
  Blocks starts by calculating the set $C$,
  where $C = I$ if there is no $v \in V$ with $|N^+(v) \cap I| = 2$,
  and otherwise
  $$
  C = \bigcap \{N^+(v) \cap I : v \in V(G), |N^+(v) \cap I| = 2\}.
  $$

  Assume that $|N^+(v) \cap I| = 2$ for some $v \in V$.
  If $v \in V_0$, then $c \in N^+(v)$, because $V_0$ is a clique.
  If $v \in V_1$, then $N^+(v)$ can intersect at most one vertex from $I \cap V_1$
  and at most one vertex from $I \cap V_0 = \{c\}$ and since $|N^+(v) \cap I| = 2$,
  we have $c \in N^+(v)$.
  For each $v \in V$ if $|N^+(v) \cap I| = 2$, then $c \in N^+(v) \cap I$,
  so $c$ belongs to their intersection.
  If no $v \in V$ exists with $|N^+(v) \cap I| = 2$, then $C = I$, but $c \in I$,
  so again $c \in C$.
  We deduce that if $V_0 \cap I = \{c\}$, then $c \in C$ and $|C| > 0$.

  If $|C| = 1$ or $|C| = 2$ then $\texttt{test}(G, I \setminus \{i\})$ is
  tested individually for each vertex $i \in C$,
  but $c \in C$ and $\texttt{test}(G, I \setminus \{c\}) = True$ by Lemma \ref{corr_test}.

  If $|C| > 2$, then there is no $v \in V$ with $|N^+(v) \cap I| = 2$.
  Either $|I| = s(R, G)$ or $|I| = s(R, G) + 1$, so either all side cliques are represented
  by vertices of $I$, or at most one is not represented, say $S$.
  We can handle both cases simultaneously by saying that $S = \emptyset$ in the former case.
  We have that $I$ is a maximal independent set,
  but no vertex of $I \setminus \{c\}$ can see a vertex of
  $S$, because they belong to different side cliques,
  so $c$ is connected to all vertices of $S$ and therefore $\{c\} \cup S$ is a clique.
  Let $T = N(c) \cap (V_1 \setminus S)$.
  Then $|N^+(v) \cap I| = 2$ for each $v \in T$,
  but no such vertex exists by assumption,
  so $T = \emptyset$.
  Now $N(c) \cap V_1 = S$, and $V_1$ is a union of disjoint cliques, so
  $V_1 \cup \{c\}$ is also a union of disjoint cliques.
  Hence $R^\prime = (V_0 \setminus \{c\}, V_1 \cup \{c\})$ is a representation of $G$,
  so from Lemma \ref{corr_test} $\texttt{test}(G, I) = True$.

  On the contrary, if $G \notin \mathcal{GS}^+$, then there is no representation for $G$,
  hence $\texttt{test}$ cannot generate a block decomposition of $G$,
  and therefore $\texttt{test}$ will return $False$.
\end{proof}

\subsection{Time Complexity}
\begin{lemma}
  \label{time_recognise}
  $\textnormal{\texttt{recognise}}(G)$ takes $O(n^2)$ time.
\end{lemma}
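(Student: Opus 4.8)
The plan is to decompose the total running time of $\texttt{recognise}(G) = \texttt{blocks}(G, \texttt{indep}(G))$ into three independent contributions and bound each by $O(n^2)$: the initial call to $\texttt{indep}(G)$, the computation of the set $C$ inside $\texttt{blocks}$, and the constantly many calls to $\texttt{test}$. The first contribution is immediate: by Lemma \ref{time_indep} the call $\texttt{indep}(G)$ runs in $O(n^2)$ time, and it produces an independent set $I$ with $|I| \le n$.

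Next I would bound the cost of $\texttt{blocks}(G, I)$ up to its calls to $\texttt{test}$. The single \texttt{for} loop ranges over all $v \in V(G)$; for each $v$ it forms $N^+(v) \cap I$ to test whether $|N^+(v) \cap I| = 2$, and when this holds it replaces $C$ by $C \cap N^+(v)$. Under the assumed set data structures, $N^+(v) \cap I$ costs $O(\min(|N^+(v)|, |I|)) = O(n)$, and $C \cap N^+(v)$ (with $C \subseteq I$) costs $O(n)$ as well. Since there are $n$ iterations, computing $C$ takes $O(n^2)$, and the final branching on $|C|$ is $O(1)$.

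Then I would analyse a single call to $\texttt{test}(G, I')$, where $I' \subseteq I$. Building $f$ uses a loop over $i \in I'$ that computes $N^+(i) \cap U$ and $U \setminus N^+(i)$, each $O(n)$, and assigns $f(v)$ once per vertex; since $|I'| \le n$, the whole construction of $f$ is $O(|I'| \cdot n) = O(n^2)$. The subsequent call $\texttt{verify}(G, f)$ is $O(n^2)$, as established at the end of Section \ref{sec:verify}. The crucial structural fact is that the branching in $\texttt{blocks}$ guarantees that $\texttt{test}$ is invoked at most twice (once in the $|C|=1$ and the \texttt{else} branches, twice in the $|C|=2$ branch), a constant number of times. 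Hence the combined cost of all $\texttt{test}$ calls is $O(n^2)$, and summing the three contributions yields $O(n^2)$.

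I expect the main obstacle to be bookkeeping rather than arithmetic: one must argue carefully that every set operation in $\texttt{blocks}$ and in the construction of $f$ really is $O(n)$ under the stated model, and then emphasise that the number of $\texttt{test}$ calls is bounded by a \emph{constant} rather than by $|I|$. This last point is exactly where the localisation of $I \cap V_0$ to at most two candidates (performed when computing $C$) pays off, replacing the naive $O(|I|\,n^2) = O(n^3)$ scan over all $i \in I$ described before $\texttt{blocks}$ by an $O(n^2)$ procedure.
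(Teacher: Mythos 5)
Your proof is correct and follows essentially the same decomposition as the paper's: bound \texttt{indep} by Lemma \ref{time_indep}, bound the $C$-building loop in \texttt{blocks} by $O(n)$ set operations over $O(n)$ iterations, bound each \texttt{test} call (preparation plus \texttt{verify}) by $O(n^2)$, and note that \texttt{test} is invoked at most twice. The extra emphasis you place on the constant number of \texttt{test} calls and on the set-operation cost model is consistent with, and slightly more explicit than, the paper's argument.
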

\begin{proof}
  The procedure $\texttt{test}$ loops over a subset of $V$ and intersects two subsets of $V$,
  so the time for each step is bounded by $O(n)$, and since the number of steps is $O(n)$,
  $O(n^2)$ time is spent in the loop.
  Then it performs one more operation in $O(n)$ time, so the total time spent for preparation
  is $O(n^2)$.
  Then $\texttt{test}$ calls $\texttt{verify}$, which takes $O(n^2)$ time,
  so the total running time of $\texttt{test}$ is $O(n^2)$.

  While building $C$, $\texttt{blocks}$ handles $O(n)$ sets with size $O(n)$,
  so it spends $O(n^2)$ time in the first stage.
  Depending on the size of $C$, $\texttt{blocks}$ calls $\texttt{test}$ once or twice,
  but in both cases it takes $O(n^2)$ time,
  so the total running time of $\texttt{blocks}$ is $O(n^2)$.
  The total time spent for recognition is the time spent for $\texttt{blocks}$
  plus the time spent for $\text{indep}$, and since both are $O(n^2)$,
  the total running time for recognition is $O(n^2)$.
\end{proof}

\section{Algorithms for random perfect graphs}
\label{sec:random_perfect}

Gr{\"o}tschel, Lov{\'a}sz, and Schrijver \cite{algorithms_perfect}
show that
the stable set problem,
the clique problem,
the colouring problem,
the clique covering problem
and their weighted versions
are computable in polynomial time for perfect graphs.
The algorithms rely on the Lov{\'a}sz sandwich theorem,
which states that for every graph $G$ we have
$\omega(G) \le \vartheta(\overline{G}) \le \chi(G)$,
where $\vartheta(G)$ is the Lov{\'a}sz number.
The Lov{\'a}sz number can be approximated via the ellipsoid method in polynomial time,
and for perfect graphs we know that $\omega(G) = \chi(G)$,
hence $\vartheta(G)$ is an integer and its precise value can be found.
Therefore $\chi(G)$ and $\omega(G)$ can be found in polynomial time for perfect graphs,
though these are NP-hard problems for general graphs.
Further, $\alpha(G)$ and $\overline\chi(G)$ (the clique covering number)
can be computed from the complement of $G$ (which is perfect).
The weighted versions of these parameters
can be found in a similar way using the weighted version
of the Lov{\'a}sz number, $\vartheta_w(G)$.

These results tell us more about computational complexity
than algorithm design in practice.
On the other hand,
the problems above are much more easily solvable for generalised split graphs.
We know that the vast majority of the $n$-vertex perfect graphs are
generalised split graphs \cite{promelsteger}.
One can first test if the input perfect graph
is a generalised split graph using the algorithm in this paper
and if so, apply a more efficient solution.

Eschen and Wang \cite{history3} show that,
given a generalised split graph $G$ with $n$ vertices
together with a unipolar representation of $G$ or $\overline{G}$,
we can efficiently solve each of the following four problems:
find a maximum clique,
find a maximum independent set,
find a minimum colouring,
and find a minimum clique cover.

It is sufficient to show that this is the case when $G$ is unipolar,
as otherwise we can solve the complementary problem in the complement of $G$.
Finding a maximum size stable set and minimum clique cover in a unipolar graph
is equivalent to determining whether there exists a vertex in the central clique
such that no side clique is contained in its neighbourhood,
which is trivial and can be done very efficiently.
Suppose there are $k$ side cliques.
If there is such a vertex $v$, then a maximum size stable set
(of size $k + 1$) consists of $v$ and from each side clique a vertex not adjacent to $v$,
and a minimum size clique cover is formed by the central clique and the $k$ side cliques.
If not, then a maximum size stable set (of size $k$) consists of a vertex from each side
clique,
and a minimum clique cover is formed by extending the $k$ side cliques
to maximal cliques (which then cover $C_0$).

Let us focus on finding a maximum clique and minimum colouring of a unipolar graph $G$
with a representation $R$.
If $R$ contains $k$ side cliques, $C_1, \ldots C_k$,
then
$$
\omega(G)
= \chi(G)
= \max \{\omega(G[C_0 \cup C_i])\}_{i=1}^k
= \max \{\chi(G[C_0 \cup C_i])\}_{i=1}^k,
$$
where $C_0$ is the central clique.
Therefore, in order to find a maximum clique or a minimum colouring,
it is sufficient to solve the corresponding problem in each of the co-bipartite graphs
induced by the central clique and a side clique.
The vertices outside a clique in a co-bipartite graph
form a cover in the complementary bipartite graph,
and the vertices coloured with the same colour in a proper colouring of a co-bipartite graph
form a matching in the complementary bipartite graph.
By K\"onig's theorem it is easy to find a minimum cover using a given maximum matching,
and therefore finding a maximum clique and a minimum colouring in a co-bipartite graph
is equivalent to finding a maximum matching in the complementary bipartite graph.
For colourings, we explicitly find a minimum colouring in each co-bipartite graph
$G[C_0 \cup C_i]$,
and such colourings can be fitted together using no more colours,
since $C_0$ is a clique cutset.
Assume that $\overline{G[C_0 \cup C_i]}$ contains $n_i$ vertices and $m_i$ edges,
so each $n_i \leq n$ and $\sum_i m_i \leq |C_0| (n- |C_0|) \leq n^2 /4$.
We could use the Hopcroft--Karp algorithm for maximum matching in $O((|E| + |V|)\sqrt{|V|})$
time to find time bound $\sum_i O((m_i + n_i) \sqrt{n_i}) = O((n + m)\sqrt n) = O(n^{2.5})$.

The approach of Eschen and Wang \cite{history3} is very similar,
and they give more details,
but unfortunately there is a mistake with their analysis,
and a corrected version of their analysis yields $O( n^{3.5} / \log n)$ time,
instead of the claimed $O( n^{2.5} / \log n)$.
In order to see the mistake consider the case when the input graph is a split graph
with an equitable partition.

Given a random perfect graph $R_n$, we run our recognition algorithm in time $O(n^2)$.
If we have a generalised split graph, with a representation,
we solve each of our four optimisation problems in time $O(n^{2.5})$,
if not, which happens with probability $e^{-\Omega(n)}$,
we run the methods from \cite{algorithms_perfect}.
This simple idea yields a polynomial-time algorithm for each problem with
low expected running time,
and indeed the probability that the time bound is exceeded is exponentially small.

\bibliography{recognition}

\newcommand{\etalchar}[1]{$^{#1}$}
\begin{thebibliography}{CCL{\etalchar{+}}05}

\bibitem[APT79]{tarjan}
Bengt Aspvall, Michael~F. Plass, and Robert~Endre Tarjan.
\newblock A linear-time algorithm for testing the truth of certain quantified
  boolean formulas.
\newblock {\em Inf. Process. Lett.}, 8(3):121--123, 1979.

\bibitem[CCL{\etalchar{+}}05]{perfect_recognition_2}
Maria Chudnovsky, G{\'e}rard Cornu{\'e}jols, Xinming Liu, Paul Seymour, and
  Kristina Vu{\v{s}}kovi{\'c}.
\newblock Recognizing {B}erge graphs.
\newblock {\em Combinatorica}, 25(2):143--186, 2005.

\bibitem[CH12]{history2}
Ross Churchley and Jing Huang.
\newblock Solving partition problems with colour-bipartitions.
\newblock {\em Graphs and Combinatorics}, pages 1--12, 2012.

\bibitem[CLV03]{perfect_recognition_1}
G{\'e}rard Cornu{\'e}jols, Xinming Liu, and Kristina Vu{\v{s}}kovi{\'c}.
\newblock A polynomial algorithm for recognizing perfect graphs.
\newblock In {\em Foundations of Computer Science, 2003. Proceedings. 44th
  Annual IEEE Symposium on}, pages 20--27. IEEE, 2003.

\bibitem[EIS76]{even}
S.~Even, A.~Itai, and A.~Shamir.
\newblock On the complexity of timetable and multicommodity flow problems.
\newblock {\em SIAM Journal on Computing}, 5(4):691--703, 1976.

\bibitem[EW14]{history3}
Elaine~M. Eschen and Xiaoqiang Wang.
\newblock Algorithms for unipolar and generalized split graphs.
\newblock {\em Discrete Appl. Math.}, 162:195--201, January 2014.

\bibitem[GLS84]{algorithms_perfect}
Martin Gr{\"o}tschel, Laszlo Lov{\'a}sz, and Alexander Schrijver.
\newblock Polynomial algorithms for perfect graphs.
\newblock {\em North-Holland Mathematics Studies}, 88:325--356, 1984.

\bibitem[MY16]{us}
Colin McDiarmid and Nikola Yolov.
\newblock Random perfect graphs.
\newblock {\em In preparation}, 2016+.

\bibitem[PS92]{promelsteger}
Hans~J{\"u}rgen Pr{\"o}mel and Angelika Steger.
\newblock Almost all {B}erge graphs are perfect.
\newblock {\em Combinatorics, Probability and Computing}, 1(01):53--79, 1992.

\bibitem[TC85]{history1}
RI~Tyshkevich and AA~Chernyak.
\newblock Algorithms for the canonical decomposition of a graph and recognizing
  polarity.
\newblock {\em Izvestia Akad. Nauk BSSR, ser. Fiz.-Mat. Nauk}, 6:16--23, 1985.

\end{thebibliography}
\bibliographystyle{alpha}

\end{document}